\newtheorem{theorem}{Theorem}
\newtheorem{lemma}{Lemma}
\newcommand{\mbbR}{\mathbb{R}} 
\newcommand{\bdd}{\boldsymbol{d}} 
\newcommand{\bdr}{\boldsymbol{r}} 
\newcommand{\bdw}{\boldsymbol{w}}
\newcommand{\bdz}{\boldsymbol{z}}
\newcommand{\bdalpha}{\boldsymbol{\alpha}} 
\newcommand{\bdbeta}{\boldsymbol{\beta}}
\begin{document}

\title
{
Efficient and Provably Convergent Computation of Information Bottleneck: A Semi-Relaxed Approach 
\thanks{The first three authors contributed equally to this work and $\dag$ marked the corresponding author. This work was partially supported by National Key Research and Development Program of China (2018YFA0701603) and National Natural Science Foundation of China (12271289 and 62231022).}
}

\author[1]{Lingyi Chen}
\author[1]{Shitong Wu}
\author[1]{Jiachuan Ye}
\author[2]{Huihui Wu}
\author[3]{Wenyi Zhang}
\author[1$\dag$]{Hao Wu}
\affil[1]{Department of Mathematical Sciences, Tsinghua University, Beijing 100084, China}
\affil[2]{Yangtze Delta Region Institute (Huzhou), 
\authorcr
University of Electronic Science and Technology of China, Huzhou, Zhejiang 313000, China}
\affil[3]{Department of Electronic Engineering and Information Science, 
\authorcr University of Science and Technology of China, Hefei, Anhui 230027, China 
\authorcr Email: hwu@tsinghua.edu.cn}

\maketitle

\begin{abstract}
Information Bottleneck (IB) is a technique to extract information about one target random variable through another relevant random variable. This technique has garnered significant interest due to its broad applications in information theory and deep learning. Hence, there is a strong motivation to develop efficient numerical methods with high precision and theoretical convergence guarantees. In this paper, we propose a semi-relaxed IB model, where the Markov chain and transition probability condition are relaxed from the relevance-compression function. Based on the proposed model, we develop an algorithm, which recovers the relaxed constraints and involves only closed-form iterations. Specifically, the algorithm is obtained by analyzing the Lagrangian of the relaxed model with alternating minimization in each direction. The convergence property of the proposed algorithm is theoretically guaranteed through descent estimation and Pinsker's inequality. Numerical experiments across classical and discrete distributions corroborate the analysis. Moreover, our proposed algorithm demonstrates notable advantages in terms of computational efficiency, evidenced by significantly reduced run times compared to existing methods with comparable accuracy. 
\end{abstract}
\begin{IEEEkeywords}
Information bottleneck, relevance-compression, semi-relaxed model, Bregman projection, convergence analysis.
\end{IEEEkeywords}

\IEEEpeerreviewmaketitle

\section{Introduction}
The Information Bottleneck (IB) theory, initially introduced by Tishby \cite{tishby_1999_ib}, offers a method to extract relevant information, represented by a variable $Y$, from an observable variable $X$, without using explicit distortion measures \cite{cover_2003_elements}. 
This extracted information is represented by a bottleneck variable $T$, forming a Markov chain $Y\leftrightarrow X\leftrightarrow T$. 
The primary goal of the IB problem is to identify a bottleneck variable $T$ that minimizes $I(T;X)$ while keeping $I(T;Y)$ above a prescribed threshold. 
This concept is instrumental in characterizing the balance between lossy compression rates and distortion thresholds \cite{tishby_2003_tradeoff}. 
Specifically, given a source random variable $X$ with probability distribution $P_X$, and a relevant variable $Y$ with conditional distribution $P_{Y|X}$, the IB objective, also called the relevance-compression function (denoted as RI), is defined by
\begin{equation}
R(I):=\min_{P_{T|X}}~I(T;X),\quad\text{s.t.}\quad I(T;Y)\geq I.
\label{def_RI}
\end{equation}

The application of the IB theory extends widely, encompassing areas such as information theory \cite{hassanpour_2020_forwardaware,chen_2016_alternating,zaidi_2020_models} and machine learning \cite{shamir_2010_generalization,tishby_2017_dnn,alemi2016variation_ib}.
Hence, there is a strong motivation to develop efficient numerical methods with high precision and theoretical convergence guarantees. 
Despite its great importance, devising numerical algorithms for the IB problem that possesses accuracy, efficiency, and convergence guarantees is not straightforward. 
The Blahut-Arimoto (BA) algorithm \cite{tishby_1999_ib} was proposed earlier to compute the RI function given a fixed multiplier, which geometrically corresponds to the tangent slope \cite{zaidi_2020_models} of the RI curve. 
However, it needs to explore the RI curve by sweeping through a range of the multipliers to compute $R(I)$ given a target $I$, which incurs a heavy computational cost and causes instability around linear segments \cite{kolchinsky2018ib_caveats}. 
Besides, when phase transitions occur \cite{wu2019phase}, it becomes challenging to characterize the information plane, since the number of iterations for convergence will increase dramatically.

Recently, some attempts \cite{huang_2021_admm} based on the alternating direction method of multipliers (ADMM) \cite{boyd2011distributed} have been proposed to solve the augmented Lagrange multiplier (ALM) problem, whose convergence is guaranteed by introducing proximal terms. 
However, due to the complexities inherent in subproblem procedures, these methods demonstrate inefficiency and numerical instability when implemented by gradient descent. 
Hence, they are primarily suitable for addressing discrete cases with limited scale.
To enhance efficiency in numerical computation, a method based on the optimal transport (OT) approach has been proposed \cite{Chen23IBOT}, which capitalizes on the OT structure and employs the Sinkhorn algorithm \cite{wu_2023_communication,curuti_2013_sinkhorn}, incorporating closed-form iterations for the computation of the IB problem. 
Although the OT-based methods are efficient and feasible to cases with large scale, its convergence is not theoretically guaranteed. 

To address the aforementioned difficulty, in this work, we propose an accurate, efficient and convergence-guaranteed algorithm for computing the RI function. 
The proposed algorithm is based on a new IB problem formulation by relaxing the Markov chain and the inherent one-side marginal distribution representations, for which we term it a semi-relaxed IB model. 
Based on this semi-relaxed model, we develop an algorithm which recovers the relaxed constraint by analyzing its Lagrangian in an alternative manner. 
The descent value in each iteration is precisely calculated and estimated through the Pinsker's inequality \cite{cover_2003_elements} with our proposed algorithm. 
Moreover, each iteration step only involves a closed-form iteration and the convergence of our algorithm is theoretically guaranteed. 
Numerical experiments show that the proposed method is more efficient and accurate than other existing methods. 
Besides, our method is still efficient and robust in the examples with phase transition phenomena and in some data-driven cases in supervised classification scenarios \cite{amjad2019classification,goldfeld_2020_ml}, where the predicted random variable is a deterministic function of the input observation variable.

\section{Problem Formulation}

Consider a discrete memoryless source $X\in\mathcal{X}$ with a representation $T\in\mathcal{T}$, and a relevant variable $Y\in\mathcal{Y}$, where $\mathcal{X}=\{x_1,\cdots,x_M\}$, $\mathcal{T}=\{t_1,\cdots,t_N\}$, $\mathcal{Y}=\{y_1,\cdots,y_K\}$.
The variables $Y\leftrightarrow X\leftrightarrow T$ form a Markov chain, i.e., 
\begin{equation*}
P_{Y|T}(y_{k}|t_{j})=\frac{\sum_{i=1}^{M}P_{T\mid X}(t_{j}\mid x_{i})P_{Y\mid X}(y_{k}\mid x_{i})P_{X}(x_{i})}{P_{T}(t_{j})},
\end{equation*}
where $P_{T}(t_{j}):=\sum_{i=1}^{M} P_{T\mid X}(t_{j}\mid x_{i})P_{X}(x_{i})$ is the bottleneck distribution $P_{T}$ defined via conditional probability.
To simplify the notation, we denote $p_i=P_X(x_i)$, $s_{ki}=P_{Y|X}(y_k|x_i)$, $w_{ji}=P_{T|X}(t_j|x_i)$, $q_k=P_Y(y_k)$ and $\hat{I}=I+\sum_{k=1}^K q_k\log q_k$.
Then, the RI function defined in \eqref{def_RI} can be written as 
\begin{subequations}\label{org_RI}
\begin{align}
\min_{\bdw}&\quad \sum_{i=1}^M \sum_{j=1}^N p_i w_{ji}\log \frac{w_{ji}}{\sum_{i'=1}^M p_{i'} w_{ji'}}, \\
\text{s.t.}&\quad\sum_{j=1}^N w_{ji}=1,\quad\forall i; \\
&\quad\sum_{i=1}^M \sum_{j=1}^N \sum_{k=1}^K p_iw_{ji}s_{ki}\log \frac{\sum_{i'=1}^M p_{i'}w_{ji'}s_{ki'}}{\sum_{i'=1}^M p_{i'}w_{ji'}}\geq\hat{I}. 
\end{align}
\end{subequations}

The above problem is hard to solve due to the intricate logarithmic terms involving the Markov chain and transition probability conditions within the objective function and constraints. 
To address the aforementioned difficulty, a recent work \cite{Chen23IBOT} proposed to solve the RI function as a constraint optimization problem in a higher dimensional space, by introducing auxiliary variables $r_{j}=P_T(t_j)$ and $z_{kj}=P_{Y|T}(y_k|t_j)$ to transform the Markov chain and the transition probability conditions into additional constraints, thereby simplifying the objective function.
Then, the RI function can be written as
\begin{subequations}\label{original_RI}
\begin{align}
\min_{\bdw,\bdr,\bdz}&\quad \sum_{i=1}^M \sum_{j=1}^N p_i w_{ji}(\log w_{ji}-\log r_j), \\
\text{s.t.}&\quad\sum_{j=1}^N w_{ji}=1,\quad\forall i;\quad\sum_{k=1}^K z_{kj}=1,\quad\forall j; \\
&\quad\sum_{i=1}^M p_i w_{ji}=r_j,\quad\forall j;\quad\sum_{j=1}^N r_j=1; \\
&\quad\sum_{i=1}^M p_i w_{ji}s_{ki}\Big/\Big(\sum_{i=1}^M p_i w_{ji}\Big)=z_{kj},\quad\forall j,k; \\
&\quad\sum_{i=1}^M \sum_{j=1}^N \sum_{k=1}^K p_iw_{ji}s_{ki}\log z_{kj}\geq\hat{I}. 
\end{align}
\end{subequations}

\section{Semi-Relaxed IB Model}

Traditional methods, like the ADMM-based ALM method \cite{huang_2021_admm,bayat2019_ib_revisited} and the OT-based method \cite{Chen23IBOT}, solve the RI function by considering the Markov chain and transition probability condition as constraints of variables.
However, these methods exhibit limitations in computational efficiency when designing algorithms, due to the consideration of these constraints when analyzing the Lagrangian. 

To simplify the formulation, we construct a novel form for the IB problem in \eqref{original_RI} by relaxing some constraints that are implicitly contained in the model.
In light of this development, we delve into the original motivation behind considering the problem's formulation in a higher dimension space to obtain a simple structure and relaxing the Markov chain and transition probability constraints, leading us to propose a semi-relaxed IB model, as follows:
\begin{subequations} \label{SR_RI}
\begin{align}
\min_{\bdw,\bdr,\bdz}&\quad f(\bdw,\bdr)\overset{\triangle}{=}\sum_{i=1}^M \sum_{j=1}^N p_i w_{ji}(\log w_{ji}-\log r_j), \label{SR_RI_1} \\
\text{s.t.}&\quad\sum_{j=1}^N w_{ji}=1,\quad\forall i;\quad\sum_{k=1}^K z_{kj}=1,\quad\forall j; \label{SR_RI_2} \\
&\quad\sum_{j=1}^N r_j=1; \quad\sum_{i=1}^M \sum_{j=1}^N \sum_{k=1}^K p_iw_{ji}s_{ki}\log z_{kj}\geq\hat{I}. \label{SR_RI_3}
\end{align}
\end{subequations}

This formulation not only simplifies the structure of mutual information constraints and the objective function, but also reduces redundant constraints in the problem formulation when relaxing the Markov chain and the transition probability constraints from the IB model \eqref{original_RI}.
Besides, we can prove the equivalence of the solution for both the semi-relaxed IB model \eqref{SR_RI} and the original IB model \eqref{original_RI}.
\begin{theorem}
The optimal solution to the semi-relaxed IB model \eqref{SR_RI} is exactly that to the original IB model \eqref{original_RI}.
\end{theorem}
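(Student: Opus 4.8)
The plan is to show that the feasible region of the original problem \eqref{original_RI} is exactly the projection onto the $(\bdw,\bdr,\bdz)$-coordinates of the feasible region of \eqref{SR_RI}, at least on the set of points relevant to the optimum, and that the objective function $f(\bdw,\bdr)$ — which is identical in both models — attains the same minimum value. Since \eqref{SR_RI} is a relaxation (it drops the Markov-chain constraint $\sum_i p_i w_{ji}=r_j$ and the transition-probability constraint $z_{kj}=\sum_i p_iw_{ji}s_{ki}/\sum_i p_iw_{ji}$), any feasible point of \eqref{original_RI} is feasible for \eqref{SR_RI}, so $\mathrm{val}\eqref{SR_RI}\le\mathrm{val}\eqref{original_RI}$ is immediate. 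The whole content is the reverse inequality, and more precisely that an optimal $(\bdw^\star,\bdr^\star,\bdz^\star)$ of \eqref{SR_RI} can be modified to a point feasible for \eqref{original_RI} without increasing the objective.

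The key steps, in order. First I would fix an optimizer $(\bdw^\star,\bdr^\star,\bdz^\star)$ of \eqref{SR_RI} and argue that $\bdr^\star$ must equal the induced marginal: define $\tilde r_j:=\sum_i p_i w_{ji}^\star$; then $\sum_j\tilde r_j=\sum_i p_i=1$, so $\tilde{\bdr}$ is feasible for the $\bdr$-block, and $\bdz^\star$ does not interact with $\bdr$ in \eqref{SR_RI}, so $(\bdw^\star,\tilde{\bdr},\bdz^\star)$ is still feasible. Now observe that for fixed $\bdw$, minimizing $f(\bdw,\bdr)=\sum_{i,j}p_iw_{ji}(\log w_{ji}-\log r_j)$ over the simplex $\{\bdr:\sum_j r_j=1\}$ is, up to an additive constant, minimizing $-\sum_j\big(\sum_i p_iw_{ji}\big)\log r_j$, a Gibbs-type inequality whose unique minimizer is precisely $r_j=\sum_i p_iw_{ji}$. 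Hence $\bdr^\star=\tilde{\bdr}$ already, so the Markov-chain marginal constraint $\sum_i p_iw_{ji}^\star=r_j^\star$ holds automatically at any optimum of \eqref{SR_RI} — this is the crux, and the reason the relaxation is lossless in the $\bdr$-direction.

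Second, I would handle the $\bdz$-block. Here $\bdz$ enters \eqref{SR_RI} only through the single inequality $\sum_{i,j,k}p_iw_{ji}^\star s_{ki}\log z_{kj}\ge\hat I$ together with $\sum_k z_{kj}=1$. The point is that redefining $z_{kj}':=\sum_i p_iw_{ji}^\star s_{ki}/\sum_i p_iw_{ji}^\star$ (the value forced by the transition-probability constraint) does not destroy feasibility: by the log-sum / Gibbs inequality, for each $j$ the quantity $\sum_k\big(\sum_i p_iw_{ji}^\star s_{ki}\big)\log z_{kj}$ is maximized over the simplex in $z_{\cdot j}$ exactly at $z_{kj}=z_{kj}'$, so replacing $\bdz^\star$ by $\bdz'$ only increases the left side of the inequality, keeping it $\ge\hat I$. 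Crucially $\bdz$ does not appear in the objective $f$, so this substitution is free. After both substitutions, $(\bdw^\star,\bdr^\star,\bdz')$ satisfies every constraint of \eqref{original_RI} and has the same objective value $f(\bdw^\star,\bdr^\star)=\mathrm{val}\eqref{SR_RI}$, giving $\mathrm{val}\eqref{original_RI}\le\mathrm{val}\eqref{SR_RI}$ and hence equality; tracing the argument backwards also shows the optimal $\bdw$ (and the induced $\bdr,\bdz$) coincide, which is the "exactly that" in the statement.

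The main obstacle I anticipate is purely technical bookkeeping rather than a deep difficulty: one must be careful about boundary points where some $w_{ji}^\star=0$ (so that a column sum $\sum_i p_iw_{ji}^\star$ could vanish and $\log r_j$ or $\log z_{kj}$ blow up). The standard fix is the convention $0\log 0=0$ and the observation that if $\sum_i p_iw_{ji}^\star=0$ then the $j$-th cluster contributes nothing to either the objective or the constraint, so it can be dropped or assigned arbitrary feasible $r_j,z_{\cdot j}$; alternatively one shows the optimum lies in the relative interior where all relevant marginals are strictly positive. A secondary point worth stating cleanly is the uniqueness claim in the Gibbs inequality step, since that is what upgrades "same optimal value" to "same optimal solution." Apart from these caveats, the proof is a short two-sided inequality argument built entirely on the non-negativity of KL divergence.
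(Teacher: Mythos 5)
Your proposal is correct, but it follows a genuinely different route from the paper. The paper's proof is a KKT argument: it assumes the semi-relaxed optimizer satisfies the KKT conditions, takes stationarity in $\bdr$ and $\bdz$, deduces $\eta=1$ and $\beta_j=\lambda\sum_i p_i w_{ji}^{*}$, and concludes that the relaxed marginal and Markov-chain conditions are automatically recovered at the optimum; the converse direction (original-optimal $\Rightarrow$ feasible, hence optimal, for the relaxed model) is the same in both treatments. You instead run a block-wise exchange argument built on Gibbs' inequality: the $\bdr$-block of the relaxed problem is uniquely minimized at the induced marginal $r_j=\sum_i p_iw_{ji}$ (so the marginal constraint is forced at any optimum, not just at KKT points), and the $\bdz$-block can be replaced by the conditional $z_{kj}'=\sum_i p_iw_{ji}s_{ki}/\sum_i p_iw_{ji}$ without decreasing the left side of the mutual-information inequality and without touching the objective, yielding the two-sided value inequality. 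What your approach buys is elementarity and robustness: it needs no constraint qualification, no differentiability at the boundary, and no implicit assumption that the multiplier $\lambda$ of the inequality constraint is strictly positive (which the paper's derivation of $\beta_j=\lambda\sum_i p_iw_{ji}^{*}$ silently requires), and you explicitly flag the $0\log 0$ / vanishing-marginal boundary cases. What it gives up is slightly weaker information about the $\bdz$-part of the relaxed optimizer: the paper asserts $\bdz^{*}$ itself satisfies the transition-probability condition, whereas you only show it can be replaced by the conditional — which is in fact the more accurate statement when the mutual-information constraint is inactive and the relaxed-optimal $\bdz$ is non-unique, and it is fully sufficient for the claimed equivalence of the two models.
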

\begin{proof}
Assuming $(\bdw^*,\bdr^*,\bdz^*)$ is the optimal solution in \eqref{SR_RI}, it thus satisfies the Karush-Kuhn-Tucker (KKT) condition.
For short, we denote the constraints in \eqref{SR_RI_2} and \eqref{SR_RI_3} as $g_1(\bdw)={\bm 0}$, $g_2(\bdz)={\bm 0}$, $g_3(\bdr)=0$ and $g_4(\bdw,\bdz)
\leq 0$, and then we have 
\begin{subequations} \label{KKT}
\begin{align}
&\nabla_{\bdr} f(\bdw^*, \bdr^*)+\eta\nabla g_1(\bdr^*) = {\bm 0}, \\
& \bdbeta\nabla_{\bdz} g_2(\bdz^*)+\lambda\nabla g_4(\bdw^*,\bdz^*) = {\bm 0},
\end{align}
\end{subequations}
where $\eta,\bdbeta$ and $\lambda\in\mbbR^+$ are the corresponding multipliers. 
Noting that $w^*_{ij}$ and $s_{ik}$ are the conditional probabilities, we can derive that $\eta=1$ and $\beta_j=\lambda\sum_{i=1}^M p_i w_{ji}^{*}$. 
Substituting these into condition \eqref{KKT}, we obtain $(\bdw^*,\bdr^*,\bdz^*)$ satisfying the relaxed conditions.
Hence, $(\bdw^*,\bdr^*,\bdz^*)$ is also a feasible solution to the original IB model \eqref{original_RI}.
%
%
In this way, we have shown that the optimal solution set to the semi-relaxed IB model \eqref{SR_RI} is a subset of that to the original IB model \eqref{original_RI}.
%
%
On the other hand, the optimal solution to \eqref{original_RI} is also a feasible solution to \eqref{SR_RI}. 
Since the objective functions are the same for these two models, it is also optimal to \eqref{SR_RI}. 
%
\end{proof}

\section{The Alternating Bregman Projection Algorithm}

In this section, we propose a convergence guaranteed algorithm for computing the proposed semi-relaxed IB model. 
Due to the introduction of the Bregman projection \cite{benamou2015iterative} at one main step and the alternating minimization framework, we name it as the Alternating Bregman Projection (ABP) algorithm. 
%

\subsection{Algorithm Derivation and Implementation}

By introducing the multipliers $\bdalpha\in\mathbb{R}^M$, $\bdbeta\in\mathbb{R}^N$, $\eta\in\mathbb{R}$, $\zeta\in\mathbb{R}^+$, the Lagrangian of the semi-relaxed IB model is
\begin{equation*}
\begin{aligned}
&\mathcal{L}(\bdw,\bdr,\bdz;\eta,\bdalpha,\bdbeta,\lambda)=\!\sum_{i=1}^M \sum_{j=1}^N p_i w_{ji}(\log w_{ji}-\log r_j) \\
&+\sum_{i=1}^M \alpha_i\Big(\sum_{j=1}^N w_{ji}-1\Big)+\sum_{j=1}^N \beta_j\Big(\sum_{k=1}^K z_{kj}-1\Big) \\
&+\eta\Big(\sum_{j=1}^N r_j\!-\!1\Big)\!\!+\!\!\lambda\Big(\!-\!\!\sum_{i=1}^M \sum_{j=1}^N \sum_{k=1}^K p_i w_{ji}s_{ki}\log z_{kj}+\hat{I}\Big).
\label{formula-lagrangian-ri}
\end{aligned}
\end{equation*}
%

Our key ingredient is optimizing the primal variables $\bdw,\bdr,\bdz$ in an alternative manner. 
%
%
Based on the Lagrangian of the semi-relaxed model, we take derivatives with respect to the primal variables $\bdw,\bdr$ and $\bdz$ for its optimal expression, yielding closed-form iterative solutions. 
As a result, the efficiency of computations is ensured during each iteration, and the descent value can be accurately estimated.

\subsubsection{Updating $w$ via its Dual Variables}

Taking partial derivative of $\mathcal{L}(\bdw,\bdr,\bdz;\eta,\bdalpha,\bdbeta,\lambda)$ with respect to $w$, and denoting $d_{ij}=-\sum_{k=1}^K s_{ki}\log z_{kj}$ as the metric for short, we obtain the first order condition
\begin{equation*}
\dfrac{\partial\mathcal{L}}{\partial w_{ji}}=p_i(1+\log w_{ji}-\log r_j)+\alpha_i+\lambda p_i d_{ij}=0,
\end{equation*}
which implies $w_{ji}=e^{-\frac{\alpha_i}{p_i}-1}e^{-\lambda d_{ij}}r_j$. 
Using the Bregman projection into a polytope, we obtain 
\begin{equation*}
w_{ji}=e^{-\lambda d_{ij}}r_j\Big/ \Big(\sum_{j'=1}^N e^{-\lambda d_{ij'}}r_{j'}\Big),\quad\forall i,j,
\end{equation*}
where $\lambda$ is the root of a monotonic function $G(\lambda)$ defined by
\begin{equation*}
G(\lambda):=\!\!\sum_{i=1}^M\sum_{j=1}^N d_{ij}p_i\Bigg(e^{-\lambda d_{ij}}r_j \Big/ \Big(\sum_{j'=1}^N e^{-\lambda d_{ij'}}r_{j'}\Big)\Bigg)+\hat{I}.
\end{equation*}

\subsubsection{Updating $\bdr$ via its Dual Variable}
%
%

%
The first order condition yields 
\begin{equation*}
\dfrac{\partial\mathcal{L}}{\partial r_j}=-\sum_{i=1}^M\dfrac{p_i w_{ji}}{r_j}+\eta=0, 
\end{equation*}
which implies $r_j=\Big(\sum_{i=1}^M p_i w_{ji}\Big)\Big/\eta$.
Substituting the obtained expression into the constraint of $\bdr$, we have
\begin{equation*}
\sum_{j=1}^N\Big(\sum_{i=1}^M p_i w_{ji}\Big)\Big/\eta=1. 
\end{equation*}
Note that $\sum_{i=1}^M \sum_{j=1}^N p_i w_{ji}=1$, it implies $\eta=1$. 
Then we can update $\bdr$ under fixed $\bdw$ and $\bdz$ by $r_j=\sum_{i=1}^M p_i w_{ji},\forall j$. 
This is actually the relaxed transition probability constraint.

\subsubsection{Updating $\bdz$ via its Dual Variables}
%
%

The first order condition yields 
\begin{equation*}
\dfrac{\partial\mathcal{L}}{\partial z_{kj}}=\beta_j-\lambda\sum_{i=1}^M\dfrac{p_i w_{ji}s_{ki}}{z_{kj}}=0,
\end{equation*}
which implies $z_{kj}=\lambda\Big(\sum_{i=1}^M p_i w_{ji}s_{ki}\Big)\Big/\beta_j$.
Substituting this expression into the constraint of $\bdz$, we have
\begin{equation*}
\sum_{k=1}^K\lambda\Big(\sum_{i=1}^M p_i w_{ji}s_{ki}\Big)\Big/\beta_j=1.
\end{equation*}
Considering that $\sum_{k=1}^K s_{ki}=1$ implies $\beta_j=\lambda\sum_{i=1}^M p_i w_{ji}$, we can update $\bdz$ under fixed $\bdw$ and $\bdr$ by 
\begin{equation*}
z_{kj}=\Big(\sum_{i=1}^M p_i w_{ji}s_{ki}\Big)\Big/\Big(\sum_{i=1}^M p_i w_{ji}\Big),\quad \forall j,k.
\end{equation*}
This is actually the relaxed Markov chain condition.

After one iteration cycle, we finally update the metric $\bdd$ by its definition, i.e., $d_{ij}=-\sum_{k=1}^K s_{ki}\log z_{kj}$ after obtaining $\bdz$. 

For clarity, we summarize the procedure in Algorithm~\ref{alg:ri}.

\begin{algorithm}
\caption{The proposed ABP algorithm}
\label{alg:ri}
\begin{algorithmic}[ht]
\STATE{\bf Input} $p_i=p(x_i)$, $s_{ki}=p(y_k|x_i)$, $R$, \textit{max\_iter}
\STATE{\bf Output} Minimal $\sum_{i=1}^M\sum_{i=1}^N p_iw_{ji}(\log w_{ji}-\log r_j)$
\STATE Randomly initialize $\bdw^{(0)}$, $\bdr^{(0)}$, $\bdz^{(0)}$ and $\bdd^{(0)}$, $\lambda^{(0)}\leftarrow 1$
\FOR{$n=1:\textit{max\_iter}$}
    \STATE Find the root $G(\lambda^{(n)})=0$ by Newton's method
    \STATE $w_{ji}^{(n)}\leftarrow\!\!\Big(e^{-\lambda d_{ij}^{(n-1)}}r_j^{(n-1)}\Big)\!\!\Big/\!\!\Big(\sum_{j'=1}^N e^{-\lambda d_{ij'}^{(n-1)}}\!\!r_{j'}^{(n-1)}\Big)$
    \STATE $r_j^{(n)}\leftarrow\sum_{i=1}^M p_i w_{ji}^{(n)}$
    \STATE $z_{kj}^{(n)}\leftarrow\Big(\sum_{i=1}^M p_i w_{ji}^{(n)}s_{ki}\Big)\Big/\Big(\sum_{i=1}^M p_i w_{ji}^{(n)}\Big)$
    \STATE $d_{ij}^{(n)}\leftarrow-\sum_{k=1}^K s_{ki}\log z_{kj}^{(n)}$
\ENDFOR
\STATE{\bf Return} $\sum_{i=1}^M\sum_{i=1}^N p_i w_{ji}^{(n)}(\log w_{ji}^{(n)}-\log r_j^{(n)})$
\end{algorithmic}
\end{algorithm}

Although the ABP algorithm shares some similarities with the well-established BA algorithm, their conceptual frameworks differ significantly. 
%
%
Specifically, the semi-relaxed IB model is considered as a constraint optimization in the ABP algorithm, rather than treating it solely as an unconstrained objective function in the BA algorithm.  
This perspective allows for a more flexible and efficient update scheme for the Lagrange multiplier, which in turn leads to improved convergence properties and computational efficiency. 
In contrast, the BA algorithm cannot compute $R(I)$ directly with a given $I$.
%


Moreover, a pivotal distinction between the ABP algorithm for the semi-relaxed IB model and the GAS algorithm \cite{Chen23IBOT} for the IB-OT model lies in the relaxation of the Markov chain and transition probability constraints. 
Nevertheless, the proposed algorithm is able to recover the relaxed Markov chain and transition probability constraints, which guarantees the feasibility of the iteration points when solving the semi-relaxed model in a higher dimensional space. 
%
%
%
%
%
This relaxation gives rise to an innovative iterative scheme, wherein each variable can be updated through a closed-form solution, ensuring great efficiency and avoiding solving sub-problems. 
In addition, the dual problem with respect to $\bdw$ can be solved accurately, leading to a provably convergent algorithm, since the descent can be estimated in each direction. 
%

\addtolength{\topmargin}{0.04in}

\subsection{Convergence Analysis}

In this subsection, we present the convergence analysis of the proposed ABP algorithm for the semi-relaxed IB model in computing the RI function. 
It is worth noting that the descent of objective function between adjacent iterations can be obtained, due to the closed-form expressions in our ABP algorithm, which is the foundation of the convergence analysis.
%
%

\begin{lemma}
The objective function $f(\bdw,\bdr)$ is non-increasing during each iteration step, i.e., 
\begin{equation*}
f(\bdw^{(n)},\bdr^{(n)})\!\leq\! f(\bdw^{(n)},\bdr^{(n-1)})\!\leq\! f(\bdw^{(n-1)},\bdr^{(n-1)}).
\end{equation*}
Moreover, we have a descent estimation of the objective between adjacent iterations: 
\begin{multline}\label{err_estimate}
f(\bdw^{(n)},\bdr^{(n)})-f(\bdw^{(n-1)},\bdr^{(n-1)}) \\
=-\lambda^{(n)}\sum_{j=1}^N r_j^{(n-1)}D(\bdz_j^{(n-1)}\Vert\bdz_j^{(n-2)}) \\
-D(\bdr^{(n)}\Vert\bdr^{(n-1)})-\sum_{i=1}^M p_i D(\bdw_i^{(n-1)}\Vert\bdw_i^{(n)}).
\end{multline}
Here vector $\bdw_i$ denotes the $i$-th column of the matrix $\bdw$, and vector $\bdz_j$ denotes the $j$-th column of the matrix $\bdz$. 
\end{lemma}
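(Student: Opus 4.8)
The plan is to track the change in the objective through the three sub-updates of one iteration cycle, handling each primal variable separately and then summing the contributions. Since the updates are performed in the order $\bdw \to \bdr \to \bdz \to \bdd$, I would first write the telescoping identity
\begin{equation*}
f(\bdw^{(n)},\bdr^{(n)})-f(\bdw^{(n-1)},\bdr^{(n-1)}) = \big[f(\bdw^{(n)},\bdr^{(n)})-f(\bdw^{(n)},\bdr^{(n-1)})\big] + \big[f(\bdw^{(n)},\bdr^{(n-1)})-f(\bdw^{(n-1)},\bdr^{(n-1)})\big],
\end{equation*}
and analyze the two bracketed differences. For the $\bdr$-step, fixing $\bdw=\bdw^{(n)}$ and using $r_j^{(n)}=\sum_i p_i w_{ji}^{(n)}$, a direct expansion of $f(\bdw^{(n)},\cdot)$ (which depends on $\bdr$ only through $-\sum_{i,j}p_iw_{ji}^{(n)}\log r_j$) should collapse to $-D(\bdr^{(n)}\Vert\bdr^{(n-1)})$, because $r_j^{(n)}$ is exactly the minimizer of that term over the simplex; this gives the middle term of \eqref{err_estimate}.

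For the $\bdw$-step I would use the closed-form update $w_{ji}^{(n)} = e^{-\lambda^{(n)} d_{ij}^{(n-1)}} r_j^{(n-1)} / Z_i$ with $Z_i=\sum_{j'}e^{-\lambda^{(n)} d_{ij'}^{(n-1)}} r_{j'}^{(n-1)}$. The first-order optimality characterization says $\bdw^{(n)}$ minimizes $\mathcal{L}$ over $\bdw$ in the feasible polytope with the Lagrange multiplier $\lambda^{(n)}$ chosen so that the mutual-information constraint is active (this is precisely what the root-finding on $G(\lambda)$ enforces). So I would expand $f(\bdw^{(n)},\bdr^{(n-1)})-f(\bdw^{(n-1)},\bdr^{(n-1)})$, add and subtract the term $\lambda^{(n)}\sum_{i,j}p_i(w_{ji}^{(n)}-w_{ji}^{(n-1)})d_{ij}^{(n-1)}$ — which vanishes because both $\bdw^{(n)}$ and $\bdw^{(n-1)}$ satisfy the active constraint $\sum_{i,j}p_iw_{ji}d_{ij}^{(n-1)}=-\hat I$ (here $d^{(n-1)}$ is the metric used in this iteration, computed from $\bdz^{(n-2)}$) — and recognize the remainder as a Bregman divergence. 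The key algebraic fact is that $\log w_{ji}^{(n)} = -\lambda^{(n)} d_{ij}^{(n-1)} + \log r_j^{(n-1)} - \log Z_i$, so substituting this in and using $\sum_j w_{ji}^{(n)}=\sum_j w_{ji}^{(n-1)}=1$ to kill the $-\log Z_i$ terms should reduce the difference to $-\sum_i p_i D(\bdw_i^{(n-1)}\Vert\bdw_i^{(n)})$ (note the order of arguments — it is the ``wrong-way'' divergence, which is why it appears with $\bdw^{(n-1)}$ first). This yields the third term of \eqref{err_estimate}.

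The first term $-\lambda^{(n)}\sum_j r_j^{(n-1)} D(\bdz_j^{(n-1)}\Vert\bdz_j^{(n-2)})$ is the subtlest: it does not come from a change in $f$ at all — $f$ doesn't depend on $\bdz$ — but arises from the fact that the $\bdw$-step at iteration $n$ uses $d^{(n-1)}$ (built from $\bdz^{(n-2)}$), whereas the natural "descent" bookkeeping wants to compare against the metric $d^{(n)}$ (built from $\bdz^{(n-1)}$). I would make this precise by writing $\sum_{i,j}p_iw_{ji}^{(n-1)}(d_{ij}^{(n-1)}-d_{ij}^{(n)}) = \sum_{i,j,k}p_iw_{ji}^{(n-1)}s_{ki}\log\frac{z_{kj}^{(n-1)}}{z_{kj}^{(n-2)}}$ and then invoking the $\bdz$-update identity $z_{kj}^{(n-1)} = \sum_i p_i w_{ji}^{(n-1)}s_{ki}/\sum_i p_iw_{ji}^{(n-1)} = \sum_i p_iw_{ji}^{(n-1)}s_{ki}/r_j^{(n-1)}$, which lets me replace $\sum_i p_iw_{ji}^{(n-1)}s_{ki}$ by $r_j^{(n-1)} z_{kj}^{(n-1)}$ and collapse the sum to $\sum_j r_j^{(n-1)}\sum_k z_{kj}^{(n-1)}\log\frac{z_{kj}^{(n-1)}}{z_{kj}^{(n-2)}} = \sum_j r_j^{(n-1)}D(\bdz_j^{(n-1)}\Vert\bdz_j^{(n-2)})$. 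Carried through the constraint-activeness cancellation in the $\bdw$-step, this produces the $-\lambda^{(n)}\sum_j r_j^{(n-1)}D(\bdz_j^{(n-1)}\Vert\bdz_j^{(n-2)})$ contribution. The main obstacle I anticipate is exactly this cross-iteration metric bookkeeping — keeping straight which iterate's $\bdz$ (hence $\bdd$) enters the $\bdw$-update, making sure the constraint $\sum_{i,j}p_iw_{ji}d_{ij}=-\hat I$ is indeed satisfied by both consecutive $\bdw$-iterates with respect to the same metric, and correctly identifying the direction of each Bregman divergence. Once the identity \eqref{err_estimate} is in hand, monotonicity $f(\bdw^{(n)},\bdr^{(n)})\le f(\bdw^{(n)},\bdr^{(n-1)})\le f(\bdw^{(n-1)},\bdr^{(n-1)})$ follows immediately since $\lambda^{(n)}\ge 0$ and every relative entropy is nonnegative, with the two inequalities separating the $\bdr$-contribution from the combined $\bdw$-and-$\bdz$ contribution.
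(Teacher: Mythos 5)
Your route is the same as the paper's: the same telescoping split into an $\bdr$-step and a $\bdw$-step, the $\bdr$-step collapsing to $-D(\bdr^{(n)}\Vert\bdr^{(n-1)})$ because $r_j^{(n)}=\sum_i p_i w_{ji}^{(n)}$, the active-constraint cancellation plus the closed form $\log w_{ji}^{(n)}=-\lambda^{(n)}d_{ij}^{(n-1)}+\log r_j^{(n-1)}-\log Z_i$ together with $\sum_j w_{ji}^{(n)}=\sum_j w_{ji}^{(n-1)}=1$ producing the reverse-order divergence $-\sum_i p_i D(\bdw_i^{(n-1)}\Vert\bdw_i^{(n)})$, and the identity $\sum_i p_i w_{ji}^{(n-1)}s_{ki}=r_j^{(n-1)}z_{kj}^{(n-1)}$ converting the metric mismatch into $-\lambda^{(n)}\sum_j r_j^{(n-1)}D(\bdz_j^{(n-1)}\Vert\bdz_j^{(n-2)})$. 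All three terms of \eqref{err_estimate} and their mechanisms are correctly identified.

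One internal inconsistency should be fixed before you execute the plan. In your second paragraph you assert that the added-and-subtracted term $\lambda^{(n)}\sum_{i,j}p_i(w_{ji}^{(n)}-w_{ji}^{(n-1)})d_{ij}^{(n-1)}$ vanishes because ``both $\bdw^{(n)}$ and $\bdw^{(n-1)}$ satisfy the active constraint with respect to the same metric $d^{(n-1)}$,'' and in the third paragraph you again list verifying this as a goal. That claim is false, and if it were true the $\bdz$-divergence term could not appear at all: each iterate satisfies the active constraint only with respect to the metric used in \emph{its own} iteration, i.e.\ $\sum_{i,j}p_iw_{ji}^{(n)}d_{ij}^{(n-1)}=-\hat I$ while $\sum_{i,j}p_iw_{ji}^{(n-1)}d_{ij}^{(n-2)}=-\hat I$. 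The correct move (the paper's) is to add and subtract $\lambda^{(n)}$ times each iterate's \emph{own} active-constraint value --- the two additions cancel since both equal $-\lambda^{(n)}\hat I$ --- and then the residual discrepancy $-\lambda^{(n)}\sum_{i,j}p_iw_{ji}^{(n-1)}\bigl(d_{ij}^{(n-2)}-d_{ij}^{(n-1)}\bigr)$ is exactly the $\bdz$-divergence you compute in your third paragraph. So the content of your third paragraph is not a separate bookkeeping correction on top of a vanishing cross term; it \emph{is} the non-vanishing part of that cross term. Merging the two paragraphs this way closes the gap and reproduces the paper's proof.
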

\begin{proof}
We calculate the descent in two steps. First, we have
\begin{multline*}
f(\bdw^{(n)},\bdr^{(n)})-f(\bdw^{(n)},\bdr^{(n-1)}) \\
=\sum_{i=1}^M\sum_{j=1}^N p_i w_{ji}^{(n)}\log\frac{r_j^{(n-1)}}{r_j^{(n)}} =-D(\bdr^{(n)}\Vert \bdr^{(n-1)})\leq 0,
\end{multline*}
which corresponds to the update scheme of $\bdr$. 
Second, according to the update rule of $\bdw$, we obtain
\begin{equation*}
-\hat{I}=\sum_{i=1}^M \sum_{j=1}^N p_i w_{ji}^{(n)}d_{ij}^{(n-1)}=\sum_{i=1}^M \sum_{j=1}^N p_i w_{ji}^{(n-1)}d_{ij}^{(n-2)}.
\end{equation*}
Next, we have
\begin{multline*}
f(\bdw^{(n)},\bdr^{(n-1)})-f(\bdw^{(n-1)},\bdr^{(n-1)}) \\
=\sum_{i=1}^M \sum_{j=1}^N p_i w_{ji}^{(n)}\log\dfrac{w_{ji}^{(n)}}{r_j^{(n-1)}}+\lambda^{(n)}\sum_{i=1}^M \sum_{j=1}^N p_i w_{ji}^{(n)}d_{ij}^{(n-1)} \\
-\sum_{i=1}^M \sum_{j=1}^N p_i w_{ji}^{(n-1)}\log\dfrac{w_{ji}^{(n-1)}}{r_j^{(n-1)}} -\lambda^{(n)}\sum_{i=1}^M \sum_{j=1}^N p_i w_{ji}^{(n-1)}d_{ij}^{(n-2)} \\
=\sum_{i=1}^M \sum_{j=1}^N p_i w_{ji}^{(n-1)}\log\Big(1\Big/\big(\sum_{j=1}^N e^{-\lambda^{(n)}d_{ij}^{(n-1)}}r_j^{(n-1)}\big)\Big) \\
-\sum_{i=1}^M \sum_{j=1}^N p_i w_{ji}^{(n-1)}\log\Big(w_{ji}^{(n-1)}\Big/\big(e^{-\lambda^{(n)}d_{ij}^{(n-1)}}r_j^{(n-1)}\big)\Big) \\
-\lambda^{(n)}\sum_{i=1}^M \sum_{j=1}^N \sum_{k=1}^K p_i w_{ji}^{(n-1)}s_{ki}\log(z_{kj}^{(n-1)}/z_{kj}^{(n-2)}).
\end{multline*}
The last equation is from $\sum_{j=1}^N w_{ji}^{(n)}=\sum_{j=1}^N w_{ji}^{(n-1)}=1$ and the substitution of $d_{ij}^{(n)}=-\sum_{k=1}^Ks_{ki}\log z_{kj}^{(n)}$. 
Finally, one has
\begin{multline*}
f(\bdw^{(n)},\bdr^{(n-1)})-f(\bdw^{(n-1)},\bdr^{(n-1)}) \\
=\sum_{i=1}^M \sum_{j=1}^N p_i w_{ji}^{(n-1)}\log(w_{ji}^{(n)}/w_{ji}^{(n-1)}) \\
-\lambda^{(n)}\sum_{i=1}^M \sum_{j=1}^N \sum_{k=1}^K p_i w_{ji}^{(n-1)} z_{kj}^{(n-1)}\log(z_{kj}^{(n-1)}/z_{kj}^{(n-2)}) \\
=-\sum_{i=1}^M p_i D(\bdw_i^{(n-1)}\Vert \bdw_i^{(n)}) \\
-\lambda^{(n)}\sum_{j=1}^N r_j^{(n-1)}D(\bdz_j^{(n-1)}\Vert \bdz_j^{(n-2)}) \leq 0.
\end{multline*}
This step corresponds to the update schemes of $\bdw$ and $\bdz$. 
Therefore, the objective is non-increasing in each iteration, and the estimation (\ref{err_estimate}) is obtained.
\end{proof}

\begin{lemma}
The objective $f(\bdw,\bdr)$ is non-negative.  
\end{lemma}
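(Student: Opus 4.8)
The plan is to read off the objective as a weighted sum of Kullback--Leibler divergences. For any point $(\bdw,\bdr,\bdz)$ that is feasible for the semi-relaxed model \eqref{SR_RI}, each column $\bdw_i=(w_{1i},\dots,w_{Ni})$ is a probability vector on $\{1,\dots,N\}$ by the constraint $\sum_{j=1}^N w_{ji}=1$, and $\bdr=(r_1,\dots,r_N)$ is a probability vector on the same index set by $\sum_{j=1}^N r_j=1$. Hence the inner sum over $j$ in $f$ is exactly a relative entropy, and
\[
f(\bdw,\bdr)=\sum_{i=1}^M p_i\sum_{j=1}^N w_{ji}\log\frac{w_{ji}}{r_j}=\sum_{i=1}^M p_i\,D(\bdw_i\Vert\bdr).
\]
First I would make this identification explicit, adopting the usual conventions $0\log 0=0$ and $a\log(a/0)=+\infty$ for $a>0$, so that the expression is well defined even when some $r_j$ vanish.

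Next I would invoke the non-negativity of the KL divergence (Gibbs' inequality): $D(\bdp\Vert\bdq)\ge 0$ for any two probability vectors $\bdp,\bdq$ on a finite set, with equality iff $\bdp=\bdq$. If a self-contained argument is preferred, this follows from $\log t\le t-1$ applied with $t=r_j/w_{ji}$ (equivalently from the log-sum inequality), which gives $\sum_{j=1}^N w_{ji}\log(w_{ji}/r_j)\ge \sum_{j=1}^N w_{ji}-\sum_{j=1}^N r_j=0$. Finally, since $p_i\ge 0$ for every $i$, summing the non-negative terms $p_i\,D(\bdw_i\Vert\bdr)$ over $i$ yields $f(\bdw,\bdr)\ge 0$.

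There is no genuinely hard step here; the only point needing a little care is the degenerate case where $r_j=0$ for some $j$ together with the boundary behaviour of $x\log x$ at $x=0$, both handled by the conventions above, which do not affect the bound. I would also remark that the estimate applies verbatim to every iterate $(\bdw^{(n)},\bdr^{(n)})$ of Algorithm~\ref{alg:ri}, since $\sum_{j=1}^N w_{ji}^{(n)}=1$ and $\sum_{j=1}^N r_j^{(n)}=\sum_{j=1}^N\sum_{i=1}^M p_i w_{ji}^{(n)}=1$; combined with the preceding lemma this exhibits $\{f(\bdw^{(n)},\bdr^{(n)})\}$ as a non-increasing sequence bounded below, hence convergent, which is the point of stating this lemma.
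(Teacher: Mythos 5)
Your proof is correct, and it actually takes a different (and sounder) route than the paper's. You read the objective directly as a mixture of relative entropies, $f(\bdw,\bdr)=\sum_{i=1}^M p_i\,D(\bdw_i\Vert\bdr)$, and conclude by Gibbs' inequality, equivalently by the log-sum inequality applied with the $r_j$ kept inside the logarithm: $\sum_{j=1}^N w_{ji}\log(w_{ji}/r_j)\ge\big(\sum_{j=1}^N w_{ji}\big)\log\big(\sum_{j=1}^N w_{ji}\big/\sum_{j=1}^N r_j\big)=0$. The paper instead first uses $\log r_j\le 0$ to drop the $r_j$, bounding $f$ below by $\sum_{i,j}p_i w_{ji}\log w_{ji}$, and then asserts $\sum_{i,j}p_i w_{ji}\log w_{ji}\ge\sum_{i}p_i\big(\sum_j w_{ji}\big)\log\big(\sum_j w_{ji}\big)=0$; but that intermediate quantity is $-H(T|X)\le 0$, and the asserted second inequality goes the wrong way (convexity of $t\log t$ only gives a lower bound of $-\log N$ per column, with the uniform column $w_{ji}=1/N$ as an explicit counterexample to the printed step). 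The chain only becomes valid if the ratio $w_{ji}/r_j$ is retained and the log-sum inequality is applied to it directly --- which is precisely your argument. So your version is the one to rely on: it gives a correct, self-contained bound, makes the equality case ($\bdw_i=\bdr$ for every $i$ with $p_i>0$) transparent, and your closing observation that each ABP iterate satisfies $\sum_j w_{ji}^{(n)}=1$ and $\sum_j r_j^{(n)}=\sum_{i,j}p_i w_{ji}^{(n)}=1$ is exactly what is needed to combine this lemma with the descent lemma; your handling of $0\log 0$ and vanishing $r_j$ is also fine, and in the algorithm itself the issue does not arise once the initialization is strictly positive.
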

\begin{proof}
$\sum_{j=1}^N r_j=1$ yields $\log r_j\leq 0$, and thus
\begin{multline*}
f(\bdw,\bdr)=\sum_{i=1}^M \sum_{j=1}^N p_i w_{ji}\log\dfrac{w_{ji}}{r_j}\geq\sum_{i=1}^M \sum_{j=1}^N p_i w_{ji}\log w_{ji} \\
\geq\sum_{i=1}^M p_i\big(\sum_{j=1}^N w_{ji}\big)\log\big(\sum_{j=1}^N w_{ji}\big)=0.
\end{multline*}
Hence, the objective $f(\bdw,\bdr)$ is lower bounded by zero.
\end{proof}

Since the objective function $f(\bdw,\bdr)$ is non-increasing and lower bounded, it must converge to a local minimum. 
Moreover, the descent estimation ensures the convergence of the iterative sequence. 

\begin{theorem}
The sequence $\big\{(\bdw^{(n)},\bdr^{(n)},\bdz^{(n)})\big\}$ converges to some local minimum $(\bdw^{*},\bdr^{*},\bdz^{*})$. 
\end{theorem}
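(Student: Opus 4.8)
The plan is to upgrade the convergence of the objective values to convergence of the iterates, using the descent estimation \eqref{err_estimate} as the main engine. First I would establish that the right-hand side of \eqref{err_estimate} is a sum of three non-positive terms whose partial sums telescope: since $f(\bdw^{(n)},\bdr^{(n)})$ is non-increasing (first lemma) and bounded below by zero (second lemma), it converges to some $f^\infty$, so $\sum_{n=1}^\infty\big(f(\bdw^{(n-1)},\bdr^{(n-1)})-f(\bdw^{(n)},\bdr^{(n)})\big)<\infty$. Hence each of the three non-negative series
\[
\sum_n\sum_i p_i D(\bdw_i^{(n-1)}\Vert\bdw_i^{(n)}),\qquad \sum_n D(\bdr^{(n)}\Vert\bdr^{(n-1)}),\qquad \sum_n\lambda^{(n)}\sum_j r_j^{(n-1)}D(\bdz_j^{(n-1)}\Vert\bdz_j^{(n-2)})
\]
converges, so each summand tends to $0$; in particular the successive-difference KL divergences of $\bdw$, $\bdr$ (and, modulo a uniform positive lower bound on $\lambda^{(n)}$ and on the weights $r_j^{(n-1)}$, of $\bdz$) all vanish.

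Next I would invoke Pinsker's inequality to convert the vanishing KL divergences into $\ell_1$-convergence of consecutive iterates: $\|\bdw^{(n)}-\bdw^{(n-1)}\|_1\to 0$, $\|\bdr^{(n)}-\bdr^{(n-1)}\|_1\to 0$, $\|\bdz^{(n)}-\bdz^{(n-1)}\|_1\to 0$, and consequently $\|\bdd^{(n)}-\bdd^{(n-1)}\|\to 0$ through the continuity of $z\mapsto-\sum_k s_{ki}\log z_{kj}$ away from the boundary. Since the feasible set (a product of simplices intersected with the mutual-information constraint) is compact, the sequence $\{(\bdw^{(n)},\bdr^{(n)},\bdz^{(n)})\}$ has a convergent subsequence with limit $(\bdw^*,\bdr^*,\bdz^*)$. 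The vanishing of consecutive differences, combined with the fact that the one-step update map is continuous in a neighborhood of any limit point where the iterates are bounded away from the boundary, forces this limit point to be a fixed point of the update map, hence to satisfy the KKT system \eqref{KKT} of the semi-relaxed model; by the first theorem it is then a local minimum. Finally, because the objective values converge to $f^\infty=f(\bdw^*,\bdr^*)$ and every subsequential limit is a fixed point with this same objective value, a standard argument (e.g. if the set of such limit points were disconnected one could separate them, contradicting $\|(\bdw^{(n)},\bdr^{(n)},\bdz^{(n)})-(\bdw^{(n-1)},\bdr^{(n-1)},\bdz^{(n-1)})\|\to 0$) upgrades subsequential convergence to convergence of the full sequence.

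The main obstacle I anticipate is controlling the boundary behavior: the argument above silently assumes the iterates $w_{ji}^{(n)}$, $r_j^{(n)}$, $z_{kj}^{(n)}$ stay uniformly bounded away from $0$, and that the multiplier $\lambda^{(n)}$ stays in a compact subset of $(0,\infty)$, so that the KL divergences, Pinsker's inequality, and the continuity of the update map are all applicable. If some coordinate can drift to $0$, the logarithms in $\bdd^{(n)}$ blow up, the update map loses continuity, and the telescoping bound on $D(\bdz_j^{(n-1)}\Vert\bdz_j^{(n-2)})$ no longer yields $\bdz_j^{(n-1)}-\bdz_j^{(n-2)}\to 0$ uniformly. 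I would handle this by first showing a uniform lower bound: the $\bdw$-update has the form $w_{ji}^{(n)}\propto e^{-\lambda^{(n)} d_{ij}^{(n-1)}}r_j^{(n-1)}$ with $d_{ij}$ bounded above (since $z_{kj}\ge$ some positive constant as a convex combination that cannot degenerate once $r_j>0$), and a compactness/monotonicity argument on $G(\lambda)$ bounds $\lambda^{(n)}$; together these keep all coordinates in a compact interior region, after which the rest of the argument goes through as sketched. A secondary, more minor point is ensuring that "local minimum" is the correct designation for an arbitrary KKT point of this nonconvex problem — here I would simply note that the statement claims convergence to a point satisfying the first-order conditions, matching the conclusion of the first theorem, and that the monotone decrease of $f$ along the trajectory rules out convergence to a strict local maximum or a saddle that the iteration would escape.
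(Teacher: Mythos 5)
Your overall strategy coincides with the paper's: both proofs run the descent estimate \eqref{err_estimate} through Pinsker's inequality, telescope against the convergence of the monotone bounded sequence $\{f(\bdw^{(n)},\bdr^{(n)})\}$, and then pass to the limit in the closed-form update formulas to read off the KKT conditions. Where you diverge is at the critical step ``small successive increments $\Rightarrow$ the iterates converge.'' The paper asserts $\sum_{n}\Vert\bdr^{(n)}-\bdr^{(n-1)}\Vert<\infty$ directly, which would make $\{\bdr^{(n)}\}$ Cauchy; but what the telescoped Pinsker bound actually delivers is $\sum_{n}\Vert\bdr^{(n)}-\bdr^{(n-1)}\Vert_1^{2}<\infty$, and square-summability of increments does not imply their summability (the harmonic-type counterexample applies). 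You correctly avoid that non sequitur by claiming only $\Vert\bdr^{(n)}-\bdr^{(n-1)}\Vert_1\to 0$ and then invoking compactness plus a connectedness argument on the set of subsequential limits. That is the more defensible route, and your explicit treatment of the boundary behavior (keeping $z_{kj}^{(n)}$, $r_j^{(n)}$ away from $0$ and $\lambda^{(n)}$ in a compact subset of $(0,\infty)$ so that $d_{ij}^{(n)}$ stays bounded and the update map stays continuous) addresses a real issue the paper passes over in silence.

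That said, your final upgrade from subsequential to full-sequence convergence is not yet complete: vanishing successive differences make the limit set connected, and all limit points are fixed points at the common objective level $f^{\infty}$, but a connected set of fixed points need not be a singleton. To close this you would need an additional ingredient --- e.g.\ that the fixed points of the update map at level $f^{\infty}$ are isolated, or a local strong-convexity/Kurdyka--\L{}ojasiewicz type argument near the limit set --- and without it the theorem as stated (convergence of the whole sequence to a single point) is not fully established by either your argument or the paper's. Your closing remark that monotone decrease ``rules out'' saddles is also not justified as stated; alternating-minimization schemes can and do converge to saddle points of nonconvex problems, so ``local minimum'' in the theorem should really be read as ``KKT/stationary point,'' which is all that taking limits in the iteration actually certifies.
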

\begin{proof}
From Pinsker's inequality and \eqref{err_estimate}, we have
\begin{multline*}
f(\bdw^{(n-1)},\bdr^{(n-1)})-f(\bdw^{(n)},\bdr^{(n)}) \\
\geq \frac{1}{2}\left[\lambda^{(n)}\sum_{j=1}^N r_j^{(n-1)}\Vert\bdz_j^{(n-1)}-\bdz_j^{(n-2)}\Vert_1^2 \right.\\
\left.+\Vert\bdr^{(n)}-\bdr^{(n-1)}\Vert_1^2+\sum_{i=1}^M p_i\Vert\bdw_i^{(n-1)}-\bdw_i^{(n)}\Vert_1^2\right] \geq 0, 
\end{multline*}
where $\Vert\cdot\Vert_{1}$ denotes the $L_1$-norm. 
Since $\{f(\bdw^{(n)},\bdr^{(n)})\}$ converges, we have
\begin{equation*}
\sum_{n=1}^\infty\big(f(\bdw^{(n)},\bdr^{(n)})-f(\bdw^{(n+1)},\bdr^{(n+1)})\big)<\infty.
\end{equation*}
The above bound gives $\sum_{n=1}^\infty\Vert\bdr^{(n)}-\bdr^{(n-1)}\Vert<\infty$, and thus $\{\bdr^{(n)}\}$ converges. 
Similarly we have $\{\bdw^{(n)}\}$ converges and the iteration formula also yields that $\{\bdz^{(n)}\}$ converges. 
%
%
Hence, we denote $(\bdw^{*},\bdr^{*},\bdz^{*})$ as the limitation of the sequence $(\bdw^{(n)},\bdr^{(n)},\bdz^{(n)})$. 
By taking the limit of the iterative expressions in the ABP algorithm, we obtain that $(\bdw^{*},\bdr^{*},\bdz^{*})$ satisfies the KKT conditions. 
In fact, the iterations ensure the feasibility of this limit point, and the inequality constraint is satisfied by the multiplier $\lambda^{*}$, confirming that the limit point meets the KKT conditions.
\end{proof}

\section{Numerical Results}

This section evaluates the efficiency and effectiveness of the proposed approach by conducting numerical experiments over classical examples. 
These experiments have been implemented by Matlab R2022a on a Linux platform with 128G RAM and one Intel(R) Xeon(R) Gold 5117 CPU@2.00GHz. 

\subsection{Accuracy and Efficiency on Classical Distributions}

This subsection computes the relevance-compression function $R(I)$ of two classical models, i.e. the jointly Bernoulli model and the jointly Gaussian model formulated in \cite{zaidi_2020_models}.

For the jointly Bernoulli model, we set $X\sim B(0,1/2)$, $Y\sim B(0,1/2)$ and $X\oplus Y\sim B(0,e)$, where the summation is defined modulo $2$, $0\leq e\leq 1/2$. 
Let $h$ be the entropy, i.e., 
$
h(x):=-x\log x-(1-x)\log(1-x).
$
If $I=\log 2-h(u)$ where $e\leq u\leq 1/2$ is a parameter, then $R(I)=\log 2-h(v)$, $v=e+(1-2e)u$.
Here, we take the flip probability $e=0.15$, and the dimension is set as $M=N=K=2$.

For the jointly Gaussian model, we set $X=\sqrt{\text{SNR}}~Y+S$ where $X,S\sim N(0,1)$, then
\begin{equation*}
R(I)=-\dfrac{1}{2}\log\dfrac{(1+\text{SNR})e^{-2I}-1}{\text{SNR}}.
\end{equation*}
In our experiment, we truncate the variables into an interval $[-L,L]$, and discretize the interval by a set of uniform grid points $\{x_i\}_{i=1}^M$, where $x_i=-L+(i-1/2)\delta, \delta=2L/M$.
The parameter is set as $\text{SNR}\!=\!1$, $L\!=\!10$ and $M\!=\!N\!=\!K\!=\!100$.

We compare the performance of ABP algorithm with that of traditional BA algorithm. 
Moreover, we extend our comparison with the GAS algorithm \cite{Chen23IBOT} which directly computes the RI function as well as the ADMM based method for the ALM problem of the RI function \cite{huang_2021_admm}. 
The computational results are summarized in Table I, wherein each result has been obtained by repeating the experiment $50$ times. 
Due to numerical instability problems encountered while employing the ADMM algorithm \cite{huang_2021_admm} for computing the jointly Gaussian model, the running time is not recorded in Table I. 
%
%
\footnote{We utilized the code at \textit{“https://github.com/hui811116/ib-admm”} to implement the ADMM algorithm. 
%
%
Since the scale of jointly Bernoulli model is limited ($M=N=K=2$), we successfully obtain numerical results reported in Table I. 
However, the scale of our jointly Gaussian model is much too larger ($M=N=K=100$) for ADMM algorithm to execute normally, which may be due to numerical instability in gradient computations. 
Based on the results of all tested cases, we believe that the ABP algorithm could still be faster than the ADMM algorithm on a suitable scale.
}

\begin{table}[ht]
    \renewcommand\arraystretch{1.2}
    \centering
    \caption{Comparison among ABP algorithm, GAS algorithm, BA algorithm and ADMM algorithm for classical distributions.}
    \begin{tabular}{|c|c|c|c|c|c|} 
    \hline
        \multirow{2}{*}{} & \multirow{2}{*}{$(I,\lambda_I)$} & 
        \multicolumn{4}{c|}{Time($\times 10^{-2}$s)} \\ \cline{3-6}
        \multicolumn{1}{|c|}{} & \multicolumn{1}{c|}{} & ABP & GAS & BA & ADMM \\ \cline{1-6}
        \multirow{4}{*}{Bernoulli} & $(0.0626,2.1478)$ & $0.0781$ & $0.402$ & $1.83$ & $0.703$ \\  
        \multicolumn{1}{|c|}{} & $(0.0910,2.2118)$ & $0.0419$ & $0.114$ & $0.926$ & $0.291$ \\
        \multicolumn{1}{|c|}{} & $(0.1254,2.3113)$ & $0.0884$ & $0.307$ & $0.764$ & $0.208$ \\ 
        \multicolumn{1}{|c|}{} & $(0.1662,2.4800)$ & $0.0275$ & $0.028$ & $0.406$ & $0.396$ \\ \cline{1-6}
        \multirow{5}{*}{Gaussian} & $(0.0400,2.1817)$ & $0.630$ & $20.8$ & $99.3$ & $-$ \\ 
        \multicolumn{1}{|c|}{} & $(0.0800,2.4199)$ & $1.492$ & $18.2$ & $95.3$ & $-$ \\ 
        \multicolumn{1}{|c|}{} & $(0.1200,2.7444)$ & $1.245$ & $25.2$ & $109.6$ & $-$ \\ 
        \multicolumn{1}{|c|}{} & $(0.1600,3.2109)$ & $2.533$ & $25.8$ & $108.8$ & $-$ \\ 
        \multicolumn{1}{|c|}{} & $(0.2000,3.9357)$ & $4.828$ & $33.1$ & $57.8$ & $-$ \\ \cline{1-6}
    \end{tabular}
    \vspace{+.05in}
    
    \footnotesize{Notes: a) For Bernoulli case, we take $I$'s value corresponding to $u=0.1,0.15,0.2,0.25$; b) The algorithm stops if the decrease of $R$ with respect to the previous step is within $10^{-6}$; c) For the BA and ADMM algorithm, we search for the suitable $\lambda_I$ adaptively if the error is within $10^{-4}$; it generally takes around $20$ trials to find a suitable $\lambda_I$.}
\end{table}

As shown in Table I, the ABP algorithm has a significant advantage against traditional BA algorithm in computing time for both models, resulting in the speed-up ratios reaching tens and more. 
Additionally, our ABP algorithm also demonstrates superiority in computing time over the GAS algorithm and the ADMM algorithm with the same accuracy. 

\subsection{Convergence Behavior and Algorithm Verification}

This subsection verifies the convergence behavior of the ABP algorithm by evaluating the summation of residual errors in the $L_1$-norm within the KKT condition. 
Specifically, we focus on the case where ABP algorithm is applied to compute RI function of the jointly Bernoulli and Gaussian model with different value $I$. 
In these two experiments, parameters are set the same as the above, and the maximum number of iteration is set as $3000$. 
As shown in Fig. \ref{fig_convergence}, the residual errors converge below $10^{-9}$ in all tested cases. 
\begin{figure}[htbp]
    \centering
    \includegraphics[width=0.49\linewidth]{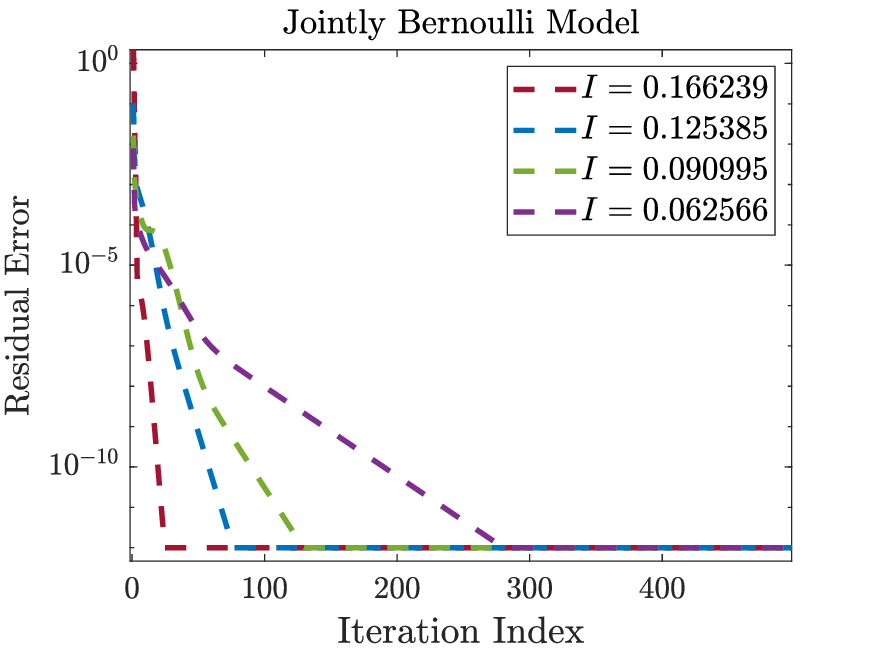}
    \includegraphics[width=0.49\linewidth]{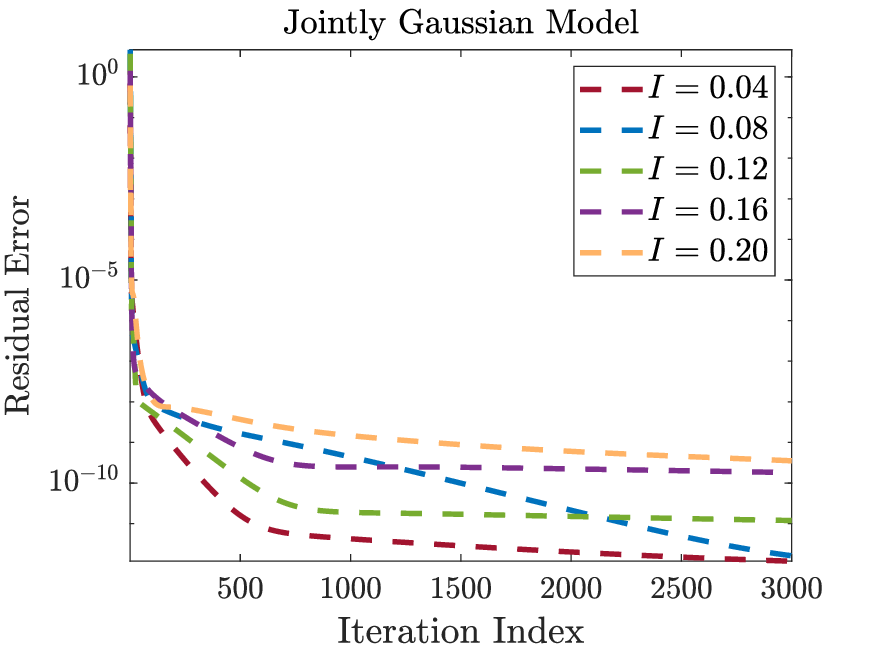} 
    \caption{The convergent trajectories of the residual error for the proposed ABP algorithm. Bernoulli (Left), Gaussian (Right).} 
    \label{fig_convergence} 
\end{figure}

\subsection{Experiments on Iris Dataset}
In this subsection, we conduct the ABP algorithm on a real-world classification dataset: the Iris dataset from the UCI learning repository\cite{blake_1998_uci}. 
We regard the observable variable $X$ as the sample, and the target variable $Y$ as the classification results. 
The joint distribution is set as the empirical distribution derived from the Iris dataset. 
Specifically, we discretize the product feature space in units of $1$, count the frequency of samples falling into each unit, and finally obtain the joint empirical distribution after filtering zeros.

\begin{figure}[ht]
    \centering
    \includegraphics[width=0.45\textwidth]{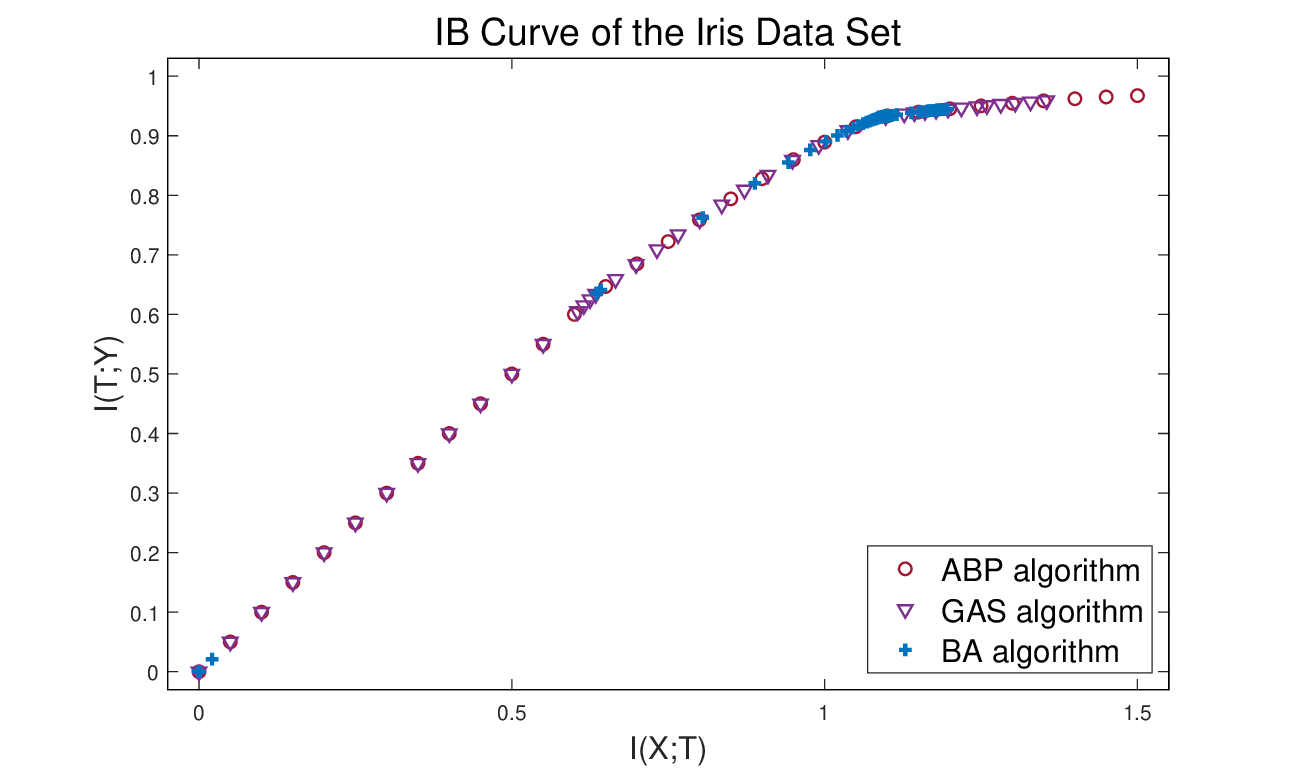}
    \caption{Comparison among the ABP, GAS and BA algorithm for a scenario of a real-world dataset in classification task.}
    \label{fig_iris}
\end{figure}

%
%
As illustrated in Fig. \ref{fig_iris}, both the ABP and GAS algorithms successfully reconstruct the IB curve for this task. 
Notably, the ABP algorithm demonstrates robustness even when the threshold $I$ is very high. 
%
%
In addition, the ABP algorithm exhibits excellent numerical stability performance around points with phase transition phenomena \cite{kolchinsky2018ib_caveats} and offers substantial advantages in terms of computation time over other algorithms.
In contrast, numerical results of many thresholds $I$ are missing by the BA algorithm. 
We also utilize the open source code of the ADMM to compute this task, but similar to before, some numerical errors may occur when the dimensions of $X$ and $Y$ are different, so the corresponding results are not plotted.

\section{Conclusion}
In this paper, we propose a semi-relaxed IB model, where the Markov chain and transitional probability condition are relaxed from the proposed model.
Based on this model, we design an efficient algorithm that turns out to recover the semi-relaxed constraints in the model with theoretical convergence guaranteed, by analyzing the Lagrangian of the relaxed model.
Numerical experiments demonstrate notable advantages of our proposed method compared with other existing approaches.

\bibliographystyle{bibtex/IEEEtran}
\bibliography{bibtex/bib/ref}
\newpage 
\begin{appendix}
\subsection{Formulation and Algorithm for IR Case}

Our semi-relaxed variant model for the IR case is given by
\begin{subequations}
\begin{align}
\min\limits_{w,r,z}&\quad f_{IR}(\bdw,\bdz)\overset{\triangle}{=}-\sum\limits_{i=1}^M \sum\limits_{j=1}^N \sum\limits_{k=1}^K p_i w_{ij}s_{ki}\log z_{kj}, \\
\text{s.t.}&\quad\sum\limits_{j=1}^N w_{ij}=1,\quad\forall i;\quad\sum\limits_{k=1}^K z_{kj}=1,\quad\forall j; \\ 
&\quad\sum\limits_{j=1}^N r_j=1; \quad\sum\limits_{i=1}^M \sum\limits_{j=1}^N p_i w_{ij}(\log \frac{w_{ij}}{r_j})\leq R.
\end{align}\label{formula-relaxed-ir}
\end{subequations}

By taking the first order condition into the constraints, we can similarly define the monotonic function
\begin{equation*}
\begin{aligned}
G_{IR}(\lambda)=&-\sum\limits_{i=1}^M p_i\log\Big(\sum\limits_{j=1}^N e^{-\lambda d_{ij}}r_j\Big)
-\lambda\sum\limits_{i=1}^M\sum\limits_{j=1}^N p_i d_{ij}\cdot \\
&\Bigg(e^{-\lambda d_{ij}}r_j\Big/\Big(\sum\limits_{j'=1}^N e^{-\lambda d_{ij'}}r_{j'}\Big)\Bigg)-R.
\end{aligned}
\end{equation*}

We can solve the IR function by alternatively minimizing the primal variables using the corresponding Lagrangian.
Then, we obtain the ABP algorithm for computing the IR function, where the only change is the way of updating $\lambda$.
The proposed algorithm is demonstrated as Algorithm~\ref{alg:ir}. 
\begin{algorithm}
\caption{ABP algorithm for the IR function}
\label{alg:ir}
\begin{algorithmic}[ht]
\STATE{\bf Input} $p_i=p(x_i)$, $s_{ki}=p(y_k|x_i)$, $I$, \textit{max\_iter}
\STATE{\bf Output} Minimal $-\sum\limits_{i=1}^M \sum\limits_{j=1}^N \sum\limits_{k=1}^K p_i w_{ij}s_{ki}\log z_{kj}$
\STATE Randomly initialize $\bdw^{(0)}$, $\bdr^{(0)}$, $\bdz^{(0)}$ and $\bdd^{(0)}$, $\lambda^{(0)}\leftarrow 1$
\FOR{$n=1:\textit{max\_iter}$}
    \STATE Find the root $G_{IR}(\lambda^{(n)})=0$ by Newton's method 
    \STATE $w_{ij}^{(n)}\leftarrow\Big(e^{-\lambda d_{ij}^{(n-1)}}r_j^{(n-1)}\Big)\Big/\Big(\sum\limits_{j'=1}^N e^{-\lambda d_{ij'}^{(n-1)}}r_{j'}^{(n-1)}\Big)$
    \STATE $r_j^{(n)}\leftarrow\sum\limits_{i=1}^M p_i w_{ij}^{(n)}$
    \STATE $z_{kj}^{(n)}\leftarrow\Big(\sum\limits_{i=1}^M p_i w_{ij}^{(n)}s_{ki}\Big)\Big/\Big(\sum\limits_{i=1}^M p_i w_{ij}^{(n)}\Big)$
    \STATE $d_{ij}^{(n)}\leftarrow-\sum\limits_{k=1}^K s_{ki}\log z_{kj}^{(n)}$
\ENDFOR
\STATE{\bf Return} $-\sum\limits_{i=1}^M \sum\limits_{j=1}^N \sum\limits_{k=1}^K p_i w_{ij}^{(n)}s_{ki}\log z_{kj}^{(n)}$
\end{algorithmic}
\end{algorithm}

\subsection{Convergence Analysis}
In this section, we provide a convergence proof of our ABP algorithm on computing the $I(R)$ function, which is analogous to the $R(I)$ case. 
For convenience, we also denote $f_{IR}(\bdw,\bdz)=-\sum\limits_{i=1}^M\sum\limits_{i=1}^N\sum\limits_{i=1}^K p_i w_{ij}s_{ki}\log z_{kj}$ for analysis.

\begin{theorem}
For IR case, the objective $f_{IR}$ is non-increasing during each iteration step, that is 
\begin{equation*}
f_{IR}(\bdw^{(n)},\bdz^{(n)})\!\leq\! f_{IR}(\bdw^{(n)},\bdz^{(n-1)})\!\leq\! f_{IR}(\bdw^{(n-1)},\bdz^{(n-1)}).
\end{equation*}
\end{theorem}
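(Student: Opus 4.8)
The plan is to mirror, step for step, the two-part descent argument from the $R(I)$ Lemma, now applied to the IR objective $f_{IR}(\bdw,\bdz)$. First I would handle the inequality $f_{IR}(\bdw^{(n)},\bdz^{(n)})\le f_{IR}(\bdw^{(n)},\bdz^{(n-1)})$, which comes purely from the update rule of $\bdz$. Writing the difference out, one gets
\begin{equation*}
f_{IR}(\bdw^{(n)},\bdz^{(n)})-f_{IR}(\bdw^{(n)},\bdz^{(n-1)})
=-\sum_{i=1}^M\sum_{j=1}^N\sum_{k=1}^K p_i w_{ij}^{(n)}s_{ki}\log\frac{z_{kj}^{(n)}}{z_{kj}^{(n-1)}}.
\end{equation*}
Since the update $z_{kj}^{(n)}=\big(\sum_i p_i w_{ij}^{(n)}s_{ki}\big)/\big(\sum_i p_i w_{ij}^{(n)}\big)$ makes $\sum_i p_i w_{ij}^{(n)}s_{ki}=r_j^{(n)}z_{kj}^{(n)}$ with $r_j^{(n)}=\sum_i p_i w_{ij}^{(n)}$, this collapses to $-\sum_{j} r_j^{(n)}\sum_k z_{kj}^{(n)}\log(z_{kj}^{(n)}/z_{kj}^{(n-1)})=-\sum_j r_j^{(n)}D(\bdz_j^{(n)}\Vert\bdz_j^{(n-1)})\le 0$, using nonnegativity of KL divergence.

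For the second inequality $f_{IR}(\bdw^{(n)},\bdz^{(n-1)})\le f_{IR}(\bdw^{(n-1)},\bdz^{(n-1)})$, I would use the $\bdw$-update, which is a Bregman projection solving the $\bdw$-subproblem of the Lagrangian exactly. The key identity to exploit is that the root-finding step for $G_{IR}(\lambda^{(n)})=0$ pins down the value of the relevance term: the chosen $\lambda^{(n)}$ forces $\sum_{i,j}p_i w_{ij}^{(n)}(\log(w_{ij}^{(n)}/r_j^{(n-1)}))$ to equal a fixed quantity tied to $R$ (analogous to the $-\hat I$ identity in the $R(I)$ proof, where the constraint value is held constant across iterations). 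Substituting $w_{ij}^{(n)}=e^{-\lambda^{(n)}d_{ij}^{(n-1)}}r_j^{(n-1)}/Z_i^{(n)}$ with $Z_i^{(n)}=\sum_{j'}e^{-\lambda^{(n)}d_{ij'}^{(n-1)}}r_{j'}^{(n-1)}$, and using $d_{ij}^{(n-1)}=-\sum_k s_{ki}\log z_{kj}^{(n-1)}$, the difference $f_{IR}(\bdw^{(n)},\bdz^{(n-1)})-f_{IR}(\bdw^{(n-1)},\bdz^{(n-1)})$ should rearrange (after adding and subtracting the $\lambda^{(n)}$-weighted relevance terms, which cancel by the constraint identity) into a sum of negative KL terms, roughly $-\sum_i p_i D(\bdw_i^{(n-1)}\Vert\bdw_i^{(n)})$ together with a $\bdr$-related KL term, exactly as in the $R(I)$ calculation with the roles of objective and constraint swapped.

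The main obstacle I anticipate is bookkeeping the swapped roles cleanly: in the IR problem the compression expression $\sum_{i,j}p_i w_{ij}(\log(w_{ij}/r_j))$ now plays the part of the \emph{constrained} quantity while the relevance $-\sum p_i w_{ij}s_{ki}\log z_{kj}$ is the \emph{objective}, so the cancellation that produced the descent in the $R(I)$ proof must now be re-derived with $\lambda^{(n)}$ multiplying the compression term and the constraint value $R$ held fixed across consecutive iterations. One must verify that the analogue of the identity $\sum_{i,j}p_i w_{ji}^{(n)}d_{ij}^{(n-1)}=\sum_{i,j}p_i w_{ji}^{(n-1)}d_{ij}^{(n-2)}$ holds here — namely that the compression value is frozen at $R$ by the root-finding step — and that the $\bdr$-update $r_j^{(n)}=\sum_i p_i w_{ij}^{(n)}$ still yields a nonpositive increment; once these are in place the chain of inequalities closes exactly as before. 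The remaining steps (invoking nonnegativity of KL, assembling the telescoped bound) are routine and identical in form to the $R(I)$ case, so I would state them briefly and refer back to the earlier Lemma.
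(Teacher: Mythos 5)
Your proposal matches the paper's proof essentially step for step: the first inequality collapses to $-\sum_j r_j^{(n)}D(\bdz_j^{(n)}\Vert\bdz_j^{(n-1)})\le 0$ via the $\bdz$-update, and the second uses the frozen constraint value $R=\sum_{i,j}p_i w_{ij}^{(n)}\log(w_{ij}^{(n)}/r_j^{(n-1)})=\sum_{i,j}p_i w_{ij}^{(n-1)}\log(w_{ij}^{(n-1)}/r_j^{(n-2)})$ together with the Bregman form of $w_{ij}^{(n)}$ to produce $-\tfrac{1}{\lambda^{(n)}}\bigl(\sum_i p_i D(\bdw_i^{(n-1)}\Vert\bdw_i^{(n)})+D(\bdr^{(n-1)}\Vert\bdr^{(n-2)})\bigr)\le 0$. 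The only bookkeeping correction is that in the IR case the compression term enters with weight $1/\lambda^{(n)}$ rather than $\lambda^{(n)}$, which does not affect the sign or the conclusion.
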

\begin{proof}
For the first inequality, we have
\begin{equation*}
\begin{aligned}
&f_{IR}(\bdw^{(n)},\bdz^{(n)})-f_{IR}(\bdw^{(n)},\bdz^{(n-1)}) \\
&=-\sum\limits_{i=1}^M\sum\limits_{i=1}^N\sum\limits_{i=1}^K p_i w_{ij}^{(n)}s_{ki}\log(z_{kj}^{(n)}/z_{kj}^{(n-1)}) \\
&=-\sum\limits_{j=1}^N\sum\limits_{k=1}^K\Big(\sum\limits_{i=1}^M p_i w_{ij}^{(n)}\Big)z_{kj}^{(n)}\log(z_{kj}^{(n)}/z_{kj}^{(n-1)}) \\
&=-\sum\limits_{j=1}^N r_j^{(n)}D(\bdz_j^{(n)}\Vert \bdz_j^{(n-1)})\leq 0.
\end{aligned}
\end{equation*}
For the second inequality, according to the update rule of $\bdw$,
\begin{equation*}
R\!=\!\!\sum\limits_{i=1}^M\sum\limits_{j=1}^N p_i w_{ij}^{(n)}\log\dfrac{w_{ij}^{(n)}}{r_j^{(n-1)}}
\!=\!\!\sum\limits_{i=1}^M\sum\limits_{j=1}^N p_i w_{ij}^{(n-1)}\log\dfrac{w_{ij}^{(n-1)}}{r_j^{(n-2)}}.
\end{equation*}
So, we have
\begin{equation*}
\begin{aligned}
&f_{IR}(\bdw^{(n)},\bdz^{(n-1)})-f_{IR}(\bdw^{(n-1)},\bdz^{(n-1)}) \\
&=\sum\limits_{i=1}^M \sum\limits_{j=1}^N\left[ p_i w_{ij}^{(n)}d_{ij}^{(n-1)}+\dfrac{1}{\lambda^{(n)}} p_i w_{ij}^{(n)}\log\dfrac{w_{ij}^{(n)}}{r_j^{(n-1)}}\right] \\
&-\sum\limits_{i=1}^M\sum\limits_{j=1}^N \left[p_i w_{ij}^{(n-1)}d_{ij}^{(n-1)}-\dfrac{1}{\lambda^{(n)}} p_i w_{ij}^{(n-1)}\log\dfrac{w_{ij}^{(n-1)}}{r_j^{(n-2)}}\right] \\
&=\dfrac{1}{\lambda^{(n)}}\Big(\sum\limits_{i=1}^M\sum\limits_{j=1}^N p_i w_{ij}^{(n)}\log\dfrac{1}{\sum\limits_{j=1}^N e^{-\lambda^{(n)}d_{ij}^{(n-1)}}r_j^{(n-1)}} \\
&-\sum\limits_{i=1}^M\sum\limits_{j=1}^N p_i w_{ij}^{(n-1)}\log\dfrac{w_{ij}^{(n-1)}}{e^{-\lambda^{(n)}d_{ij}^{(n-1)}}r_j^{(n-2)}}\Big) \\
=&\dfrac{1}{\lambda^{(n)}}\!\!\!\left[\sum\limits_{i=1}^M\sum\limits_{j=1}^N p_i w_{ij}^{(n-1)}\log\dfrac{w_{ij}^{(n)}}{w_{ij}^{(n-1)}}
\!\!-\!\!\!\sum\limits_{j=1}^N r_j^{(n-1)}\log\dfrac{r_j^{(n-1)}}{r_j^{(n-2)}}\right].
\end{aligned}
\end{equation*}
The last equation is because $\sum\limits_{j=1}^N w_{ij}^{(n)}=\sum\limits_{j=1}^N w_{ij}^{(n-1)}=1$, and the substitution of 
\begin{equation*}
w_{ij}^{(n)}=e^{-\lambda^{(n)}d_{ij}^{(n-1)}}r_j^{(n-1)}\Big/\Big(\sum\limits_{j'=1}^N e^{-\lambda^{(n)}d_{ij'}^{(n-1)}}r_{j'}^{(n-1)}\Big).
\end{equation*}
Hence, we have
\begin{equation*}
\begin{aligned}
&0\geq f_{IR}(\bdw^{(n)},\bdz^{(n-1)})-f_{IR}(\bdw^{(n-1)},\bdz^{(n-1)}) \\
=&-\dfrac{1}{\lambda^{(n)}}\Big(\sum\limits_{i=1}^M p_i D(\bdw_i^{(n-1)}\Vert\bdw_i^{(n)})+D(\bdr^{(n-1)}\Vert\bdr^{(n-2)})\Big).
\end{aligned}
\end{equation*}
Hence, the objective is non-increase.
\end{proof}

\begin{theorem}
The objective is non-negative. 
\end{theorem}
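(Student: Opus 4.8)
The plan is to show directly that $f_{IR}(\bdw,\bdz)=-\sum_{i=1}^M\sum_{j=1}^N\sum_{k=1}^K p_i w_{ij}s_{ki}\log z_{kj}\geq 0$ by exploiting that $\bdz$ is constrained to be column-stochastic, i.e.\ $\sum_{k=1}^K z_{kj}=1$ for every $j$. First I would rewrite the objective by collecting, for each fixed $j$, the coefficient of $-\log z_{kj}$, namely $c_{kj}:=\sum_{i=1}^M p_i w_{ij}s_{ki}\geq 0$. Then $f_{IR}=\sum_{j=1}^N\sum_{k=1}^K c_{kj}(-\log z_{kj})$. The natural route is to compare $\bdz_j=(z_{1j},\dots,z_{Kj})$ against the normalized coefficient vector $\tilde{c}_{kj}:=c_{kj}/C_j$ where $C_j:=\sum_{k=1}^K c_{kj}=\sum_{i=1}^M p_i w_{ij}$ (using $\sum_k s_{ki}=1$), which is exactly $r_j$ under the relaxed transition-probability identity, though I need not invoke that; I only need $C_j\ge 0$.

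The key inequality is the nonnegativity of relative entropy. For each $j$ with $C_j>0$, write
\begin{equation*}
\sum_{k=1}^K c_{kj}\log\frac{1}{z_{kj}}
= C_j\sum_{k=1}^K \tilde{c}_{kj}\log\frac{1}{z_{kj}}
= C_j\left(D(\tilde{\bdc}_j\Vert\bdz_j) + \sum_{k=1}^K \tilde{c}_{kj}\log\frac{1}{\tilde{c}_{kj}}\right),
\end{equation*}
where $\tilde{\bdc}_j$ and $\bdz_j$ are both probability vectors on $\{1,\dots,K\}$. The relative entropy term $D(\tilde{\bdc}_j\Vert\bdz_j)\geq 0$, and the entropy term $\sum_k \tilde{c}_{kj}\log(1/\tilde{c}_{kj})=H(\tilde{\bdc}_j)\geq 0$ since each $\tilde{c}_{kj}\in[0,1]$. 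Summing over $j$ (with $C_j=0$ contributing zero, interpreting $0\log 0=0$ as usual) gives $f_{IR}\geq 0$. Alternatively, and even more cheaply, one can note $z_{kj}\in[0,1]$ directly from the simplex constraint $\sum_k z_{kj}=1$ with $z_{kj}\ge 0$, hence $-\log z_{kj}\geq 0$ termwise, and since all coefficients $p_i w_{ij}s_{ki}\geq 0$, every summand is nonnegative; this is the most direct argument and mirrors the style of Lemma~2 for the $R(I)$ case.

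The main obstacle is essentially bookkeeping rather than mathematics: one must be careful about boundary terms where $z_{kj}=0$ (forced to carry zero coefficient $c_{kj}=0$ for the objective to be finite, so $0\cdot\log 0=0$), and one should confirm that the feasible-set constraints indeed guarantee $z_{kj}\le 1$, which follows immediately from nonnegativity plus $\sum_{k}z_{kj}=1$. I would present the short termwise argument as the proof and perhaps remark that the relative-entropy decomposition recovers the same bound, paralleling the non-negativity lemma already established for $f(\bdw,\bdr)$ in the $R(I)$ setting.
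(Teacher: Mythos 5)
Your proposal is correct, and the argument you say you would actually present---$z_{kj}\in[0,1]$ from the simplex constraint, hence $-\log z_{kj}\geq 0$, and all coefficients $p_i w_{ij}s_{ki}\geq 0$, so every summand is nonnegative---is exactly the paper's proof. The relative-entropy decomposition you sketch is a valid but unnecessary elaboration of the same fact.
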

\begin{proof}
Since $\sum\limits_{k=1}^K z_{kj}=1$, we have $\log z_{kj}\leq 0$, so $f_{IR}(\bdw,\bdz)=-\sum\limits_{i=1}^M\sum\limits_{i=1}^N\sum\limits_{i=1}^K p_i w_{ij}s_{ki}\log z_{kj}\geq 0$. 
\end{proof}

From the above two theorems, we have the objective is non-increasing and lower bounded, hence it converges to a local minimum. 
Moreover, a descent estimation of the objective is ensured: 
\begin{equation*}
\begin{aligned}
&f_{IR}(\bdw^{(n)},\bdz^{(n)})-f_{IR}(\bdw^{(n-1)},\bdz^{(n-1)}) \\
=&-\sum\limits_{j=1}^N r_j^{(n)}D(\bdz_j^{(n)}\Vert \bdz_j^{(n-1)})-\dfrac{1}{\lambda^{(n)}}\Big(D(\bdr^{(n-1)}\Vert\bdr^{(n-2)}) \\
&+\sum\limits_{i=1}^M p_i D(\bdw_i^{(n-1)}\Vert\bdw_i^{(n)})\Big).
\end{aligned}
\end{equation*}
Similarly, we show the local convergence property.

\end{appendix}

\end{document}